\theoremstyle{plain}
\newtheorem*{theorem*}{Theorem}
\newtheorem{theorem}{Theorem}[section]
\newtheorem{proposition}[theorem]{Proposition}
\newtheorem{remark}[theorem]{Remark}
\newtheorem{example}[theorem]{Example}
\newtheorem{definition}[theorem]{Definition}
\newtheorem{procedure}[theorem]{Procedure}
\DeclarePairedDelimiter\ket{\lvert}{\rangle}
\DeclarePairedDelimiterX\braket[2]{\langle}{\rangle}{#1 \delimsize\vert #2}
\def\@email#1#2{%
 \endgroup
 \patchcmd{\titleblock@produce}
  {\frontmatter@RRAPformat}
  {\frontmatter@RRAPformat{\produce@RRAP{*#1\href{mailto:#2}{#2}}}\frontmatter@RRAPformat}
  {}{}
}%
\preprint{AIP/123-QED}
\begin{document}

\title{Phase spaces that cannot be cloned in classical mechanics}
\author{Yuan Yao}
    \email{yuan\_yao@berkeley.edu}

\affiliation{ 
Department of Mathematics. University of California, Berkeley. Berkeley 94720, CA.
}%

\date{\today}
\begin{abstract}
 The quantum no cloning theorem is an essential result in quantum information theory. Following this idea, we give a physically natural definition of cloning in the context of classical mechanics using symplectic geometry, building on work of Fenyes. We observe, following Fenyes, any system with phase space $(\mathbb{R}^{2N}, dx_i\wedge dy_i)$ can be cloned in our definition. However, we show that if $(M,\omega)$ can be cloned in our definition, then $M$ must be contractible. For instance, this shows the simple pendulum cannot be cloned in Hamiltonian mechanics. We further formulate a robust notion of approximate cloning, and show that if $(M, \omega)$ can be approximately cloned, then $M$ is contractible. We give interpretations of our results and in some special cases reconcile our no cloning theorems with the general experience that classical information is clonable. Finally we point to new directions of research, including a connection of our result with the classical measurement problem.
\end{abstract}
\maketitle
\section{Introduction} \label{section:intro}

The impossibility of cloning quantum states is well understood \cite{RevModPhys.77.1225,fenyes} and essential in the theory of quantum information. It is a question of great interest to understand whether cloning processes are possible in other physical systems; and there have been many such results. The impossibility of cloning for classical statistical systems is established in \cite{PhysRevLett.88.210601}. In \cite{Clone_nonlinear_Ham}, it is shown that cloning is possible for a certain class of nonlinear extensions of quantum mechanics they call ``nonlinear Hamiltonian quantum mechanics,'' but is not possible for another extension of quantum mechanics they call ``mean field hybrid Hamiltonian mechanics.''

Whether cloning is possible in the context of classical mechanics was raised in John Baez's 2008 classical mechanics class lectures (See \cite{fenyes_2010,baez_2008}). Further
study by Fenyes \cite{fenyes} gave a general formulation of cloning in classical physics allowing for the presence of a cloning machine; in their formulation it is shown that systems with phase space $(\mathbb{R}^{2N}, dx_i \wedge dy_i)$ can always be cloned. More concrete realizations of cloning in classical mechanics are given in follow up work in \cite{REDDY2019125846} where they produce an explicit Hamiltonian that can be used to clone states in $(\mathbb{R}^{2},dx\wedge dy)$ as well as experiments to realize classical cloning using nonlinear optics. 
For an overall summary of cloning in various quantum and classical contexts, see also the discussion in \cite{TEH201247}.

To remind the reader of the general setup of cloning problems, we first give a review of the quantum no-cloning theorem in section \ref{sec:quantum cloning}. Briefly speaking, cloning is the process where we start with a quantum state $\ket{b}$, let it interact with a quantum state $\ket{\psi}$ (the system to be cloned), and the cloning machine $\ket{r}$. The final output is $\ket{\psi}\otimes \ket{\psi} \otimes \ket{r'}$. The process transforms $\ket{b}$ into $\ket{\psi}$; and $\ket{r'}$ is the final state of the cloning machine.
Then in analogy with the quantum cloning process, we introduce our notion of classical cloning in Definition \ref{def:clone}. Our definition is slightly more restrictive than that given in Fenyes \cite[Definition 4]{fenyes} or the procedure used in \cite{REDDY2019125846} (see Remark \ref{remark:comparison} for a concrete example), but includes everything that is permitted in Hamiltonian dynamics. We shall explain in Section \ref{sec:classical_cloning} why our formulation is cloning is more natural as it excludes symplectic transformations that are not physical.

We observe $(\mathbb{R}^{2N}, dx_i\wedge dy_i)$ continues to admit cloning in our new definition in Proposition \ref{prop:R2n}. Next we prove our main theorem:
\begin{theorem*}[See Theorem \ref{thm:clone}] If $(M,\omega)$ can be cloned as in Definition \ref{def:clone}, then $M$ must be contractible.
\end{theorem*}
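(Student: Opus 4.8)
The plan is to extract the topological conclusion from the single physical feature that distinguishes the present notion of cloning from an arbitrary symplectomorphism: the cloning transformation is realized by Hamiltonian time evolution, and is therefore connected to the identity through a smooth isotopy. First I would unpack Definition \ref{def:clone} as follows. Cloning supplies a one-parameter family $\Phi_t$ of maps of the total phase space $M \times M \times R$, where $(R,\eta)$ is the phase space of the cloning machine, generated by a (possibly time-dependent) Hamiltonian, with $\Phi_0 = \mathrm{id}$ and with the time-one map $\Phi \coloneqq \Phi_1$ satisfying the cloning condition
$$\Phi(\psi, b, r) = (\psi,\psi,r'(\psi)) \qquad \text{for all } \psi \in M,$$
where $b \in M$ is the fixed blank state, $r \in R$ the initial machine state, and $r'(\psi) \in R$ the final machine state.

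The core of the argument is a homotopy built by following the initial slice under the flow. Define $\iota_0 \colon M \to M\times M \times R$ by $\iota_0(\psi) = (\psi, b, r)$, and set $F(\psi, t) \coloneqq \Phi_t(\iota_0(\psi))$. Because $t \mapsto \Phi_t$ is smooth and $\Phi_0 = \mathrm{id}$, the map $F$ is a homotopy from $\iota_0$ to $\Phi \circ \iota_0$, the latter being $\psi \mapsto (\psi,\psi,r'(\psi))$ by the cloning condition. Now compose with the projection $\pi_2 \colon M \times M \times R \to M$ onto the blank factor (the second slot, which carries the copy of $\psi$ after cloning). Then $\pi_2 \circ \iota_0$ is the constant map $\psi \mapsto b$, while $\pi_2 \circ \Phi \circ \iota_0$ is the identity $\psi \mapsto \psi$. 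Hence $\pi_2 \circ F$ is a homotopy from a constant map to $\mathrm{id}_M$, which is exactly the statement that $M$ is contractible.

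The step I expect to carry the real weight is the justification that Definition \ref{def:clone} genuinely supplies the isotopy $\Phi_t$ with $\Phi_0 = \mathrm{id}$, rather than merely an abstract symplectomorphism $\Phi$. A general symplectomorphism of $M \times M \times R$ need not lie in the identity component, and for such a map the projected maps $\pi_2\circ\iota_0$ and $\pi_2\circ\Phi\circ\iota_0$ carry no homotopy between them, so the contractibility conclusion would fail (consistent with the more permissive framework of Fenyes, cf.\ Remark \ref{remark:comparison}). This is precisely where the physical restriction to Hamiltonian evolution enters, and it is the heart of why the present, more restrictive, definition forces a topological obstruction. A minor technical point to dispatch is completeness of the flow up to time one on the relevant region, so that $\Phi_t$ is defined for all $t \in [0,1]$; this can be handled by standard existence theory for Hamiltonian flows or by restricting attention to the orbit of the initial slice.

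As a consistency check complementing the homotopy argument, I would also record the pointwise identity coming from $\Phi^*(\omega\oplus\omega\oplus\eta) = \omega\oplus\omega\oplus\eta$: pulling back along $\iota_0$ and along $\Phi\circ\iota_0$ forces $\omega = -\,(r')^{*}\eta$ on $M$. This shows the class $[\omega]$ is controlled entirely by the machine, and when the machine is taken with an exact symplectic form it gives $[\omega]=0$, which is necessary (though far from sufficient) for contractibility and serves as a sanity check on the main conclusion.
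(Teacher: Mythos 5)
Your proof is correct, but it takes a genuinely different route from the paper's. The paper argues at the level of homotopy groups: given a nontrivial class $[g]\in\pi_k(M)$, it compares the class $([g],0,0)$ of the slice map $x\mapsto(g(x),b,r)$ with the class $([g],[g],[t])$ of its composition with $\phi$, uses the identity-component hypothesis to conclude these classes agree, derives $[g]=0$ for every $k\geq 1$, and then invokes Whitehead's theorem to pass from weak contractibility to contractibility. You instead use the isotopy $\Phi_t$ directly to build an explicit contraction: flowing the slice $\iota_0(\psi)=(\psi,b,r)$ and projecting to the second factor gives a homotopy $\pi_2\circ F$ from the constant map $\psi\mapsto b$ to $\mathrm{id}_M$, which is contractibility by definition. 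Your route is more elementary and in a sense stronger: it avoids Whitehead's theorem (and any appeal to CW structure) entirely, and it also yields path-connectedness of $M$ for free, whereas the paper's argument as written treats only $\pi_k$ for $k\geq 1$. Both proofs rest on exactly the same key input, which you correctly identify as the heart of the matter: Definition \ref{def:clone} supplies not just a symplectomorphism but a one-parameter family $\Phi_t$ joining it to the identity (this is how the paper glosses ``identity component,'' so your worry about existence and completeness of the flow is already discharged by the definition itself -- note also that the definition only requires a symplectic isotopy, not a Hamiltonian one, but your argument never uses more than continuity of the path). Your concluding remark deriving $\omega=-(r')^{*}\eta$ from $\Phi^{*}(\omega\oplus\omega\oplus\eta)=\omega\oplus\omega\oplus\eta$ is also correct and is a nice cohomological cross-check in the spirit of Fenyes's original constraints, though it is not needed for the main conclusion.
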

This means clonable phase spaces cannot have any topology. While our main tool is a homotopy group computation and Whitehead's theorem, we now give an illustration of our methods in the case of the simple pendulum. In this case, consider the red non-contractible curve on the two dimensional torus shown in Figure \ref{red}. In the context of our proof this red non-contractible curve essentially comes from a non-contractible curve in the phase of the pendulum $T^*S^1$. If a cloning were to exist, after chasing through the definition, it would imply a continuous deformation from the red curve in Figure \ref{red} to the blue curve in Figure \ref{blue}, which is impossible. In general a similar argument shows a clonable phase space cannot contain any non-contractible spheres of any dimension, and Whitehead's theorem states any space with this property is contractible.

\begin{figure}[h!]
     \centering
     \begin{subfigure}[b] {0.36\textwidth} 
         \centering
        \includegraphics[width=\textwidth]{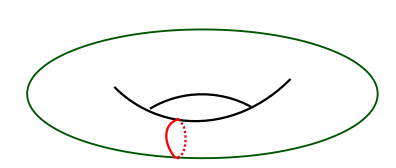}
         \caption{red curve on a torus}
         \label{red}
     \end{subfigure}
     \begin{subfigure}[b] {0.3\textwidth}
         \centering
         \includegraphics[width=1.3\textwidth]{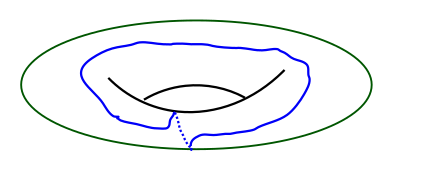}
         \caption{blue curve on a torus}
         \label{blue}
     \end{subfigure}
        \caption{The red curve cannot be deformed continuously to the blue curve.}
        \label{fig:three graphs}
\end{figure}

We next turn to the problem of approximate cloning. Even though it has been shown quantum cloning is not possible, there has been considerable interest in trying to understand whether approximate cloning is possible, see for instance \cite{optimal_quantum_cloning_machines,quantumcloning,optimal_universal_quantum_cloning,quantum_copying,quantum_cloning_machines,quantum_cloning_machine_for_equatorial_qubits,duan1997nonorthogonal,information_theoretic_limits,Probabilistic_Cloning_and_Identification_of_Linearly_Independent_Quantum_States}. The point is that if one where to develop a quantum cryptography protocol based on the no cloning theorem, one would like to understand how much information an eavesdropper can access despite the no-cloning theorem. Approximate quantum cloning protocols have been constructed with various degrees of fidelity \cite{duan1997nonorthogonal,optimal_quantum_cloning_machines,optimal_universal_quantum_cloning,Probabilistic_Cloning_and_Identification_of_Linearly_Independent_Quantum_States,quantum_copying,quantum_cloning_machine_for_equatorial_qubits,quantum_cloning_machines}.

We examine the analogous problem in classical mechanics. We give a formulation of approximate cloning in classical mechanics adapted to our setup in Definition \ref{def:approximate cloning}. Since our methods are inherently topological, we are able to show the following strong obstruction:

\begin{theorem*}[See Theorem \ref{thm:approximately clone}] If $(M,\omega)$ can be approximately cloned as in Definition \ref{def:approximate cloning}, then $M$ must be contractible.
\end{theorem*}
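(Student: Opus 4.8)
The plan is to reduce the approximate cloning statement to the exact one—or more precisely, to show that the topological obstruction already established in Theorem \ref{thm:clone} is robust enough to survive a relaxation of the defining equations. Since the exact no-cloning theorem concludes that $M$ must be contractible by showing that every homotopy group $\pi_k(M)$ vanishes (via the pendulum-style argument that a non-contractible sphere in $M$ would have to be continuously deformed into something it cannot be deformed into) and then invoking Whitehead's theorem, I would aim to re-run that homotopy-theoretic argument under the weaker hypotheses of Definition \ref{def:approximate cloning}. The key realization is that homotopy classes are discrete invariants: a sufficiently small perturbation of a continuous map cannot change its homotopy class. So an approximate cloning map, being $\varepsilon$-close to what an exact cloning map would have to satisfy, should induce the very same maps on homotopy groups as an exact clone would, and hence force the same vanishing.

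Concretely, I would first unwind Definition \ref{def:approximate cloning} to extract the continuous maps on phase space that the approximate cloning produces, together with the quantitative bound controlling how far they deviate from the exact cloning relations. Second, I would take an arbitrary class $[f] \in \pi_k(M)$ represented by a map $f \colon S^k \to M$ and trace it through the approximate cloning construction exactly as in the proof of Theorem \ref{thm:clone}, obtaining the analogue of the "red curve" (now a $k$-sphere) and its image under the cloning dynamics. Third, I would invoke a standard homotopy-stability fact: if two continuous maps $g, h \colon S^k \to N$ into a manifold satisfy $d(g(x), h(x)) < \delta$ pointwise for $\delta$ smaller than the injectivity radius (so that the straight-line geodesic homotopy is well-defined), then $g \simeq h$. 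Using this, the approximate clone's induced map on $\pi_k$ agrees with the exact clone's, so the same contradiction with non-triviality of $[f]$ arises. Finally, with $\pi_k(M) = 0$ for all $k$ established, I would conclude via Whitehead's theorem that $M$ is contractible, exactly as in the exact case.

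The main obstacle I anticipate is making the homotopy-stability step genuinely quantitative and uniform: one needs the approximation parameter in Definition \ref{def:approximate cloning} to be small \emph{relative to the geometry of $M$ along the relevant sphere}, i.e. relative to a positive injectivity-radius-type constant, so that "close maps are homotopic" actually applies. On a general (possibly non-compact) symplectic manifold $(M,\omega)$ there may be no global lower bound on the injectivity radius, so I would need to either restrict attention to the compact image of the test sphere $f(S^k)$—where such a bound does exist—or arrange the definition of approximate cloning so that the tolerance is measured in a metric-independent, homotopy-theoretic way. A secondary subtlety is ensuring that the intermediate objects arising from the approximate dynamics are still genuinely continuous (not merely close to continuous) so that they represent honest homotopy classes; this should follow from the continuity built into Definition \ref{def:approximate cloning}, but it is the point that must be checked carefully before the discreteness of homotopy invariants can be exploited.
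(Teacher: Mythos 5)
There is a genuine mismatch between your proposal and the statement being proved: you are proving a theorem about a \emph{quantitative} notion of approximate cloning (maps that are $\varepsilon$-close, pointwise, to an exact cloning map), but Definition \ref{def:approximate cloning} is not quantitative at all. In the paper's definition the ``errors'' are diffeomorphisms $h_1, h_2 \in \mathrm{Diffeo}_0(M)$, with \emph{no} size constraint whatsoever --- the paper explicitly emphasizes that $h_1,h_2$ may be arbitrarily far from the identity, so long as they lie in the identity component of the diffeomorphism group. Your key lemma (two maps into a manifold that are pointwise closer than the injectivity radius are homotopic via geodesic interpolation) therefore cannot carry the argument: it simply does not apply to errors that are large, which the hypothesis permits. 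Conversely, all of the machinery you worry about --- injectivity radius bounds, compactness of the image sphere, uniformity of $\delta$ --- is unnecessary, because the actual hypothesis hands you exactly what you need for free: since $h_i \in \mathrm{Diffeo}_0(M)$, there is an isotopy from $h_i$ to $\mathrm{id}_M$, and composing it with $g$ gives a homotopy $h_i \circ g \simeq g$ directly. That one line is the entire replacement for your stability step.

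With that substitution, your skeleton coincides with the paper's proof: take a nontrivial $[g] \in \pi_k(M)$, form $(g,b,r) \colon S^k \to M \times M \times N$ representing $([g],0,0)$, push it through $\phi$ to get $(h_1 \circ g, h_2 \circ g, r')$ representing $([g],[g],[t])$ (using $h_i \circ g \simeq g$), note that $\phi$ lies in the identity component of $\mathrm{Sympl}(M \times M \times N)$ so it cannot change homotopy classes, conclude $[g]=0$, and finish with Whitehead's theorem. To your credit, you anticipated precisely this issue in your final paragraph, where you suggest the tolerance should be ``measured in a metric-independent, homotopy-theoretic way'' --- that is exactly what Definition \ref{def:approximate cloning} does, and it is what makes the theorem both stronger (it rules out cloning with large errors, not just small ones) and easier to prove than the version you set out to establish. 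As written, though, your argument proves a different, weaker statement and leaves the actual hypothesis unaddressed.
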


Our results about non-cloning are highly counter-intuitive. In the  world around us we clone classical states all the time: a computer can copy data without worrying about the topology of phase spaces. In section \ref{physical interpretation of our results}, we take the first step in re-conciliating our theorems with our intuition. The key observation is that in the real world we are very often only cloning discrete points in phase space, and almost never entire regions of phase space that may have topology. We observe the quantum analogue is that we can always clone one dimensional subspaces of a Hilbert space, which we explain in Section \ref{subsec:a quantum analogy}.

Finally we formulate some questions for future investigation. We give a summary here and leave the detailed formulations to section \ref{sec:future directions}.

\begin{enumerate}
\item There are exotic symplectic structures on $\mathbb{R}^{2n}$ that are not symplectomorphic to the standard symplectic structure, see for example \cite{exotic_symplectic}. Can cloning still happen in these exotic geometries?
\item There has been considerable interest in the classical thermodynamics of information processing \cite{information_thermo,landauer,thermodynamics_of_feedback_controlled_systems}. Classical cloning can also be thought of as a type of information processing. The question is then is there a right notion of ``free energy'' that can be associated to the cloning process?
\item Many aspects of our results still remain counter-intuitive. In an idealized classical physics experiment, one can ask an experimenter to clone a system (for example a pendulum) by first measuring its position and momentum without disturbing the system, then pushing an identical system (e.g. another pendulum) into the same position and momentum; however, this is a violation of our no-cloning theorem.

In section \ref{sec:future directions}, we point to several directions of research required to reconcile the above with our no-cloning theorem by further examining the classical process of measurement.
\end{enumerate}
\section{Quantum Cloning}\label{sec:quantum cloning}

There are many possible definitions of cloning. We take the definition that allows for the presence of a cloning machine\footnote{There is another definition that involves no cloning machines, i.e. given $\ket{\psi}, \ket{b} \in \mathcal{H}$, we consider (unitary) maps $\mathcal{H}\otimes \mathcal{H} \rightarrow \mathcal{H} \otimes \mathcal{H}$ taking $\ket{\psi}\otimes \ket{b} \rightarrow \ket{\psi} \otimes \ket{\psi}$. Such maps are not only prohibited in quantum mechanics, their classical analogues are also not possible. See \cite{fenyes_2010,fenyes} for a discussion.}. Let $\ket{\psi} \in \mathcal{H}$ be a state vector in a Hilbert space which we think of the system to be cloned, and let $\ket{b} \in \mathcal{H}$ be another state vector. Let $\ket{r} \in \mathcal{H}'$ be a state vector which we think of the cloning machine. Then the cloning process is an unitary map $\mathcal{H} \otimes \mathcal {H} \otimes \mathcal{H}' \rightarrow \mathcal{H} \otimes \mathcal {H} \otimes \mathcal{H}'$ taking
\[
\ket{\psi} \otimes \ket{b} \otimes \ket{r} \rightarrow \ket{\psi} \otimes \ket{\psi} \otimes \ket{r'}.
\]
Here we think of $\ket{r'}\in \mathcal{H}'$ as the state of the cloning machine after the cloning process. 

The quantum no cloning theorem then says
\begin{proposition}[proposition 3 in \cite{fenyes}, Section I A \cite{RevModPhys.77.1225}]
Let $\mathcal{H}$ and $\mathcal{H}'$ both be finite dimensional complex Hilbert spaces, with $\dim \mathcal{H} >1$. There cannot exist vectors $\ket{b} \in \mathcal{H}, \ket{r} \in \mathcal{H}'$ and an unitary map $U: \mathcal{H} \otimes \mathcal{H} \otimes \mathcal{H}'\rightarrow \mathcal{H} \otimes \mathcal{H} \otimes \mathcal{H}'$ taking
\[
\ket{\psi} \otimes \ket{b} \otimes \ket{r} \rightarrow \ket{\psi} \otimes \ket{\psi} \otimes \ket{r'}
\]
for every $\ket{\psi} \in \mathcal{H}$. Here $\ket{r'} \in \mathcal{H}'$ depends on $\ket{\psi}$.
\end{proposition}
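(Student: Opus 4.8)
The plan is to argue by contradiction using the single fact that a unitary map preserves inner products. Suppose such a $U$, together with vectors $\ket{b}$ and $\ket{r}$, did exist. I would read the statement as a statement about unit state vectors and take $\ket{b},\ket{r}$ to be unit vectors as well; to keep track of the dependence of the machine's output on the input I would write $\ket{r'_\psi}$ for the final machine state corresponding to $\ket{\psi}$. The first small step is to check that each $\ket{r'_\psi}$ is again a unit vector: since $U$ preserves norms and $\lVert \ket{\psi}\otimes\ket{\psi}\otimes\ket{r'_\psi}\rVert = \lVert \ket{r'_\psi}\rVert$ when $\ket{\psi}$ is a unit vector, norm preservation forces $\lVert\ket{r'_\psi}\rVert = 1$.

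Next I would apply $U$ to the two cloning inputs built from unit vectors $\ket{\psi}$ and $\ket{\phi}$ and equate the inner product before and after. On the input side the factors $\ket{b}$ and $\ket{r}$ contribute $\braket{b}{b}\braket{r}{r}=1$, so the input inner product is just $\braket{\psi}{\phi}$. On the output side the repeated tensor slot produces $\braket{\psi}{\phi}^2$ together with $\braket{r'_\psi}{r'_\phi}$. Writing $c=\braket{\psi}{\phi}$ and $d=\braket{r'_\psi}{r'_\phi}$, unitarity of $U$ collapses everything to the single scalar identity $c = c^2 d$.

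The key step is to extract a contradiction from $c=c^2d$ using $|c|\le 1$ and $|d|\le 1$ (Cauchy–Schwarz, now that all four vectors are known to be unit vectors). If $c\ne 0$ I would divide to obtain $1=cd$, which forces $|c|=|d|=1$; equality in Cauchy–Schwarz then says $\ket{\psi}$ and $\ket{\phi}$ are proportional. Hence any two cloned states must be either orthogonal ($c=0$) or proportional ($|c|=1$), with nothing in between. Since $\dim\mathcal{H}>1$ I can defeat this: choosing an orthonormal pair $\ket{e_1},\ket{e_2}$ and setting $\ket{\psi}=\ket{e_1}$, $\ket{\phi}=\tfrac{1}{\sqrt{2}}(\ket{e_1}+\ket{e_2})$ gives states that are neither orthogonal nor proportional, the desired contradiction.

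I expect the only delicate point to be the bookkeeping around normalization — specifically verifying that $\ket{r'_\psi}$ is a genuine unit vector so that $|d|\le 1$, and being careful that the hypothesis is about unit state vectors (otherwise linearity fails trivially under rescaling $\ket{\psi}$, since the left-hand side scales linearly while $\ket{\psi}\otimes\ket{\psi}$ scales quadratically). An equally short alternative avoids inner products altogether: apply $U$ to $\tfrac{1}{\sqrt{2}}(\ket{\psi}+\ket{\phi})\otimes\ket{b}\otimes\ket{r}$ in two ways — once by linearity of $U$ acting on the two summands, and once by the cloning property applied to the superposed state — and observe that the cross terms $\ket{\psi}\otimes\ket{\phi}$ appearing in the second expansion cannot be matched by the first.
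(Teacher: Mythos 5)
Your proof is correct. Note that the paper does not actually prove this proposition---it is quoted with citations to Fenyes and to the Rev.~Mod.~Phys.\ review---and your unitarity argument (preservation of inner products giving $c = c^2 d$, then ruling out the dichotomy ``orthogonal or proportional'' using $\dim\mathcal{H}>1$) is exactly the standard argument those references employ, so your proposal matches the intended proof. One small caveat: your parenthetical claim that the statement for \emph{non-normalized} $\ket{\psi}$ fails ``trivially'' by comparing linear and quadratic scaling is not quite right as stated, since $\ket{r'}$ depends on $\ket{\psi}$ and can absorb the discrepancy (rescaling $\ket{\psi}\mapsto c\ket{\psi}$ is consistent with $\ket{r'_{c\psi}} = c^{-1}\ket{r'_{\psi}}$); this remark plays no role in your actual argument, which correctly works with unit vectors throughout.
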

Heuristically this is saying we cannot build a machine to perform quantum cloning on two systems. 

\section{Classical Cloning} \label{sec:classical_cloning}
We give a classical formulation of the above cloning process. In classical mechanics we replace the Hilbert space with the phase space (i.e. symplectic manifolds) and unitary maps with symplectomorphisms. Our definition is a modification of the classical cloning process given in \cite[Definition 4]{fenyes}. Our definition is slightly more restrictive but more natural from a classical mechanics point of view. 

\begin{definition} \label{def:clone}
A classical cloning process is given by
\begin{itemize}
    \item Symplectic manifolds $(M, \omega_1)$ (the phase space of systems to be cloned) and $(N, \omega_2)$ (the phase space of the cloning machine). 
    \item A point $b \in M$, a point $r \in N$, and a symplectomorphism $\phi$ in the identity component of $\textup{Sympl}(M\times M \times N)$ that sends
    \[
    (x,b,r) \rightarrow (x,x,f(x,b,r))
    \]
    for all $x \in M$. Here $f(-,b,r)$ is some smooth function from $M$ to $N$.
\end{itemize}
If the above data exists for the symplectic manifold $(M,\omega_1)$, then we say it is \textbf{clonable}.
\end{definition}
The only distinction between our definition and that of Definition 4 in \cite{fenyes} is that we require $\phi$ to lie in the identity component of $\textup{Sympl}(M\times M \times N)$, which means $\phi$ is connected by a one parameter family of symplectomorphisms to the identity map. This is a very natural condition to impose, because if we want our cloning process to evolve in a continuous manner, the resulting $\phi$ must lie in the identity component. To be specific, in classical mechanics, time evolution of the systems should be described by the flow of a (potentially time dependent) Hamiltonian, and such flows must lie in the identity component. It is not clear what the non-identity components of $\textup{Sympl}(M\times M \times N)$ should correspond to physically.

We first observe the cloning process introduced in in \cite[Example 1,2]{fenyes} continues to work in this new definition.
\begin{proposition}[See Example 2 in \cite{fenyes}] \label{prop:R2n}
    If $(M, \omega_1) \cong (\mathbb{R}^{2N},\omega_0)$, where $\omega_0$ is the canonical symplectic form on $\mathbb{R}^{2N}$, then $(M,\omega_1)$ can be cloned in the sense of Definition \ref{def:clone}.
\end{proposition}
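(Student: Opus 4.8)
The plan is to exhibit an explicit cloning process by taking the cloning machine to also have phase space $(\mathbb{R}^{2N},\omega_0)$, so that the total space $M\times M\times N\cong \mathbb{R}^{6N}$ is again a standard symplectic vector space. I would set $b=0\in M$ and $r=0\in N$ and search for a \emph{linear} symplectomorphism $\Phi$ of $\mathbb{R}^{6N}$ with $\Phi(x,0,0)=(x,x,Lx)$ for some fixed linear map $L\colon \mathbb{R}^{2N}\to\mathbb{R}^{2N}$. The point of insisting on a linear map is that the extra condition in Definition~\ref{def:clone} then becomes automatic: the linear symplectic group $\textup{Sp}(6N,\mathbb{R})$ is connected (it deformation retracts onto a unitary group), so any linear symplectomorphism is joined to the identity by a path of linear symplectomorphisms and hence lies in the identity component of $\textup{Sympl}(M\times M\times N)$. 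This is precisely the new requirement relative to Fenyes's Example~2, and linearity dispatches it for free.

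Next I would reduce everything to linear algebra and locate the obstruction. Write $J$ for the standard $2N\times 2N$ symplectic matrix and $\mathcal{J}=\operatorname{diag}(J,J,J)$ for the form on $\mathbb{R}^{6N}$. The requirement $\Phi(x,0,0)=(x,x,Lx)$ fixes the first block column of $\Phi$ to be $(I,I,L)^{\top}$, and the condition $\Phi^{\top}\mathcal{J}\Phi=\mathcal{J}$ forces, on that column, $I^{\top}JI+I^{\top}JI+L^{\top}JL=J$, i.e. $2J+L^{\top}JL=J$, so that $L$ must be \emph{anti}-symplectic: $L^{\top}JL=-J$. This single equation is exactly the classical shadow of no-cloning: one cannot set $a'=a$ and $b'=a$ while leaving the machine inert, since the duplicated copy contributes an extra factor of the symplectic form, and the machine block must carry an anti-symplectic piece to cancel it.

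It then remains to resolve this constraint, which is the genuinely nontrivial step, though an elementary one. Such an $L$ exists; the momentum flip $L=\operatorname{diag}(I_N,-I_N)$ satisfies $L^{\top}JL=-J$ by direct computation. Consequently $\operatorname{Im}(I,I,L)^{\top}$ is a $2N$-dimensional symplectic subspace of $\mathbb{R}^{6N}$, and I would complete it to a symplectic basis by choosing the remaining two block columns inside its $4N$-dimensional symplectic complement, which is always possible by symplectic Gram–Schmidt (Witt's theorem). In fact one can write the completion down explicitly, for example
\[
\Phi=\begin{pmatrix} I & \tfrac{1}{\sqrt{2}}I & \tfrac{1}{\sqrt{2}}L \\ I & -\tfrac{1}{\sqrt{2}}I & \tfrac{1}{\sqrt{2}}L \\ L & 0 & \sqrt{2}\,I \end{pmatrix},
\]
and verify $\Phi^{\top}\mathcal{J}\Phi=\mathcal{J}$ together with $\Phi(x,0,0)=(x,x,Lx)$ directly. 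Setting $f(x,0,0)=Lx$, which is smooth (indeed linear), then supplies all the data required by Definition~\ref{def:clone}.

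The main obstacle I anticipate is purely the symplecticity constraint $2J+L^{\top}JL=J$ and the observation that it is solvable; this anti-symplectic compensation is where the machine is doing real work and is the conceptual heart of the construction. Everything else, namely the connectedness of $\textup{Sp}(6N,\mathbb{R})$ and the existence of the symplectic completion, is standard linear symplectic geometry, and the explicit $\Phi$ above reduces the final verification to a routine matrix check.
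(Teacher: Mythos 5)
Your proposal is correct, and it is essentially the same approach the paper (implicitly) takes: the paper defers to Fenyes's Example 2 --- likewise an explicit linear cloning map whose machine register carries an anti-symplectic (momentum-flip) factor --- and merely observes that the new identity-component requirement of Definition \ref{def:clone} is met, which is exactly your point that $\mathrm{Sp}(6N,\mathbb{R})$ is connected so every linear symplectomorphism lies in the identity component of $\textup{Sympl}(M\times M\times N)$. Your explicit matrix checks out ($\Phi^{\top}\mathcal{J}\Phi=\mathcal{J}$ blockwise, using $L^{\top}JL=-J$ and $L^{\top}J=-JL$, and $\Phi(x,0,0)=(x,x,Lx)$), so your write-up is a self-contained version of the argument the paper outsources to the citation.
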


\begin{remark}
In \cite{REDDY2019125846} they provide an explicit (time dependent) Hamiltonian in the case of $(\mathbb{R}^2,dx_i \wedge dy_i)$. Their computations can be extended to show $(\mathbb{R}^{2N},dx_i\wedge dy_i)$ can be cloned using time dependent Hamiltonians.
\end{remark}

However, we also exhibit topological constraints to systems that can be cloned in the sense of Definition \ref{def:clone}.

\begin{theorem}\label{thm:clone}
If $(M,\omega)$ is clonable, then $M$ is contractible.
\end{theorem}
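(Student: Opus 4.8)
The plan is to manufacture, directly from the isotopy witnessing that $\phi$ lies in the identity component, a homotopy between $\mathrm{id}_M$ and a constant map, which is precisely what it means for $M$ to be contractible. First I would fix a continuous path $t \mapsto \phi_t$ in $\mathrm{Sympl}(M\times M\times N)$ with $\phi_0 = \mathrm{id}$ and $\phi_1 = \phi$, available exactly because $\phi$ is in the identity component. Abbreviating $g(x) = f(x,b,r)$, introduce the ``before'' and ``after'' embeddings $\iota_0, \iota_1 \colon M \to M\times M\times N$ given by $\iota_0(x) = (x,b,r)$ and $\iota_1(x) = (x,x,g(x))$; the cloning condition of Definition \ref{def:clone} is simply $\phi\circ\iota_0 = \iota_1$.

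Next I would define $H \colon M\times[0,1] \to M\times M\times N$ by $H(x,t) = \phi_t(x,b,r)$. Since $t\mapsto\phi_t$ is a path and evaluation is continuous in the topology on $\mathrm{Sympl}$, the map $H$ is continuous and satisfies $H(\cdot,0) = \iota_0$ and $H(\cdot,1) = \iota_1$, so $\iota_0 \simeq \iota_1$. Composing with the projection $p_2 \colon M\times M\times N \to M$ onto the second factor yields $G = p_2\circ H$ with $G(x,0) = p_2(\iota_0(x)) = b$ for all $x$ and $G(x,1) = p_2(\iota_1(x)) = x$. Thus $G$ is a homotopy from the constant map at $b$ to $\mathrm{id}_M$, and $M$ is contractible. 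The point is that the second coordinate is forced from the fixed input $b$ to a faithful copy of the input $x$, and an ambient isotopy cannot create this dependence on $x$ out of nothing unless $\mathrm{id}_M$ was already null-homotopic.

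Equivalently, and this is the reading the introduction advertises, one can run the same construction one sphere at a time: for any $\sigma\colon S^n\to M$ the maps $\iota_0\circ\sigma$ and $\iota_1\circ\sigma$ are homotopic, and projecting to the second factor shows $\sigma$ is null-homotopic. Hence $\pi_n(M) = 0$ for every $n$; since a smooth manifold has the homotopy type of a CW complex, Whitehead's theorem in its weakly-contractible-implies-contractible form gives that $M$ is contractible. The red-to-blue deformation of Figure \ref{red} is the $n=1$ instance of this, with the non-contractible loop in the pendulum phase space $T^*S^1$ playing the role of $\sigma$.

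The main obstacle I anticipate is not topological but lies at the interface with the symplectic setup: one must pin down the topology on $\mathrm{Sympl}(M\times M\times N)$ so that ``identity component'' genuinely supplies a path $\phi_t$ and so that the evaluation $(x,t)\mapsto\phi_t(x,b,r)$ is jointly continuous. With the standard choices (compact-open or weak $C^\infty$) this holds, and the smoothness of the Hamiltonian flows invoked after Definition \ref{def:clone} makes it automatic. A secondary technical point is the CW-homotopy-type hypothesis needed for the Whitehead step, but the direct contraction of the first two paragraphs sidesteps this entirely, so I would present that argument as primary and relegate the homotopy-group/Whitehead packaging to a remark.
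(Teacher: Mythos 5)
Your proposal is correct, and your primary argument takes a genuinely different (and more economical) route than the paper's. The paper argues one sphere at a time: given a representative $g$ of a nontrivial class in $\pi_k(M)$, it pushes the class $([g],0,0)\in\pi_k(M)\times\pi_k(M)\times\pi_k(N)$ through $\phi$ to obtain $([g],[g],[t])$, notes that composition with a symplectomorphism in the identity component cannot change a homotopy class, concludes $[g]=0$ for all $k$, and then invokes Whitehead's theorem --- this is precisely the sphere-by-sphere argument of your third paragraph, which you relegate to a remark. Your primary argument instead uses the isotopy $\phi_t$ directly: projecting $H(x,t)=\phi_t(x,b,r)$ to the second $M$ factor yields an explicit homotopy from the constant map at $b$ to $\mathrm{id}_M$, which is the definition of contractibility. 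This buys something real: it bypasses Whitehead's theorem and the CW-homotopy-type hypothesis entirely, and it produces the contraction constructively rather than by contradiction. Your anticipated obstacle (whether ``identity component'' actually supplies a path) is settled by the paper itself, since Definition \ref{def:clone} is explicitly glossed as requiring $\phi$ to be connected to the identity by a one-parameter family of symplectomorphisms; joint continuity of $(x,t)\mapsto\phi_t(x,b,r)$ then holds as you say, since $M\times M\times N$ is locally compact and the relevant topologies refine the compact-open topology. It is worth noting that your direct contraction also adapts verbatim to Theorem \ref{thm:approximately clone}: there the projection at $t=1$ gives $h_2$ rather than $\mathrm{id}_M$, and since $h_2\in\mathrm{Diffeo}_0(M)$ it is homotopic to the identity, so $\mathrm{id}_M$ is again null-homotopic.
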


\begin{proof}
    Let $k\geq 1$ be an integer, consider the homotopy groups
    \[
    \pi_k (M\times M \times N) \cong \pi_k(M) \times  \pi_k(M) \times \pi_k(N).
    \]
For ease of notation we don't mention the base point. 

Suppose $\pi_k(M)$ is nontrivial, then consider $g:S^k \rightarrow M$ representing a nonzero homotopy class $[g]\in \pi_k(M)$. Consider
the map $(g,b,r): S^k\times \{pt\} \times \{pt\} \rightarrow M \times M \times N$ sending \[(x,b,r) \rightarrow (g(x), b,r).\] This gives a nonzero element of $\pi_k(M\times M \times N)$ of the form \[([g],0,0) \in \pi_k(M)\times \pi_k(M) \times \pi_k(N).\]

Next consider the composition of $(g,b,r)$ with $\phi \in \textup{Sympl}(M\times M \times N)$; for $(x,b,r) \in S^k \times \{pt\} \times \{pt\}$ this is given by
\[
\phi \circ (g,b,r): (x,b,r) \rightarrow (g(x),g(x), r').
\]
This also represents an element in $\pi_k(M\times M \times N)$, of the form
\[
([g],[g],[t]) \in \pi_k(M)\times \pi_k (M) \times \pi_k(N)
\]
where $[t]$ is just an element $\pi_k(N)$. Since $\phi$ belongs in the identity component of $\textup{Sympl}(M\times M \times N)$, composing with $\phi$ cannot change the homotopy class. Hence $[g]=0$, a contradiction.

This shows all homotopy groups of $M$ are trivial, by Whitehead's theorem this means $M$ is contractible.
\end{proof}

We next exhibit some familiar physical systems that cannot be cloned:
\begin{example}
\begin{enumerate}
    \item The simple pendulum has phase space $(T^*S^1,d\lambda)$, where $\lambda$ is the canonical 1-form.
    \item The sphere pendulum has phase space $(T^*S^2, d\lambda)$.
    \item The double pendulum has phase space $(T^*T^2,d\lambda)$.
\end{enumerate}
\end{example}

\begin{remark} \label{remark:comparison}
    In \cite{REDDY2019125846}, they provide cloning results satisfying Definition 4 in \cite{fenyes} for phase spaces of type $(T^*G,d\lambda)$, where $G$ is a Lie group. For instance $G = SU(2) \cong S^3$. Theorem \ref{thm:clone} shows a cloning in the sense of Definition \ref{def:clone} cannot exist. This in particular shows the symplectomorphism $\phi \in \textup{Sympl}(T^*S^3 \times T^*S^3 \times T^*S^3)$ constructed in \cite{REDDY2019125846} does not lie in the identity component of $\textup{Sympl}(T^*S^3 \times T^*S^3 \times T^*S^3)$, and in particular cannot come from the flow of a Hamiltonian.
\end{remark}

\section{Topological obstructions to approximate cloning}

Even though quantum cloning is not possible, it is of great interest to quantum information theory whether \emph{approximate quantum cloning} is possible. We formulate the analogous problem for classical cloning. Because our methods are inherently topological, we are able to provide quite strong obstructions to even approximate cloning when our phase space has nontrivial topology.

\begin{definition} \label{def:approximate cloning}
    Suppose we are given
    \begin{itemize}
    \item A symplectic manifold $(M,\omega_1)$, which we think of as the system being cloned;
    \item A symplectic manifodl $(N,\omega_2)$, which we think of as the cloning machine.
    \end{itemize}
    Let $h_1, h_2 \in \mathrm{Diffeo}_0(M)$, where $\mathrm{Diffeo}_0(M)$ denotes the identity component of the diffeomorphism group of $M$. We say $(M,\omega_1)$ can be approximately cloned with error $(h_1,h_2)$ using cloning machine $(N,\omega_2)$ if there are points $b\in M$, $r\in N$, and a symplectomorphism $\phi$ in the identity component of $\mathrm{Sympl}(M\times M \times N)$ that sends
    \[
    (x,b,r) \rightarrow (h_1(x),h_2(x),f(x,b,r))
    \]
    for all $x\in M$. Here $f(-,b,r)$ is a smooth function from $M$ to $N$.

    If there exists an $(N, \omega_2)$ and $h_1,h_2 \in \mathrm{Diffeo}_0(M)$ for which $(M,\omega_1)$ can be approximately cloned with error $(h_1,h_2)$, then we simply say $(M,\omega_1)$ can be approximately cloned.
\end{definition}
We note the above definition encompasses all $C^1$-small deformations of the original cloning map as defined in Definition \ref{def:clone}. This means cloning with small errors measured in the $C^1$ norm are included. However,
observe we have placed no size constraints on the distance between $(h_1,h_2)$ and the identity map - we only require they be connected to the identity. Hence our notion of cloning with error is fairly robust and includes maps that are far away from the identity.

\begin{theorem} \label{thm:approximately clone}
If $(M,\omega_1)$ can be approximately cloned, then $M$ is contractible.
\end{theorem}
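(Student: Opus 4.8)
The plan is to adapt the homotopy-group argument from the proof of Theorem \ref{thm:clone}, with the key new ingredient being that $h_1$ and $h_2$ lying in $\mathrm{Diffeo}_0(M)$ means they act trivially on homotopy groups. First I would suppose, for contradiction, that $\pi_k(M)$ is nontrivial for some $k\geq 1$, and pick $g:S^k\to M$ representing a nonzero class $[g]\in\pi_k(M)$. Exactly as before, the map $(x,b,r)\mapsto(g(x),b,r)$ represents the element $([g],0,0)$ in $\pi_k(M)\times\pi_k(M)\times\pi_k(N)$, which is nonzero since $[g]\neq 0$. This sets up the same contradiction machinery, so the whole argument reduces to computing the homotopy class of the composition $\phi\circ(g,b,r)$ and comparing it to $([g],0,0)$.

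Next I would compute the class of the composed map. For $x\in S^k$, the definition of approximate cloning gives $\phi\circ(g,b,r):x\mapsto(h_1(g(x)),h_2(g(x)),f(g(x),b,r))$, so under the isomorphism $\pi_k(M\times M\times N)\cong\pi_k(M)\times\pi_k(M)\times\pi_k(N)$ this represents $([h_1\circ g],[h_2\circ g],[s])$ for some $[s]\in\pi_k(N)$. The crucial point is that because $h_1\in\mathrm{Diffeo}_0(M)$, there is a path $h_1^t$ of diffeomorphisms from the identity to $h_1$, and the homotopy $t\mapsto h_1^t\circ g$ shows $[h_1\circ g]=[g]$ in $\pi_k(M)$; likewise $[h_2\circ g]=[g]$. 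Hence the composed map represents $([g],[g],[s])$.

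Finally, since $\phi$ lies in the identity component of $\mathrm{Sympl}(M\times M\times N)$, composing with $\phi$ is homotopic to composing with the identity and therefore cannot change the homotopy class of any map into $M\times M\times N$. Thus $([g],0,0)=([g],[g],[s])$ in $\pi_k(M)\times\pi_k(M)\times\pi_k(N)$, and comparing the second coordinate forces $[g]=0$, contradicting our choice of $g$. Therefore every $\pi_k(M)$ vanishes, and Whitehead's theorem gives that $M$ is contractible.

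I expect the only genuinely new step beyond the proof of Theorem \ref{thm:clone} to be the observation that elements of $\mathrm{Diffeo}_0(M)$ act as the identity on $\pi_k(M)$; this is precisely why Definition \ref{def:approximate cloning} requires $h_1,h_2$ to lie in the identity component rather than being arbitrary diffeomorphisms. Everything else is a verbatim repetition of the earlier argument, so there is no real computational obstacle — the content is entirely in recognizing that the error maps, being isotopic to the identity, are invisible at the level of homotopy.
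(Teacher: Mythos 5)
Your proposal is correct and follows essentially the same argument as the paper's own proof: push the class $([g],0,0)$ through $\phi$, use that $h_1,h_2\in\mathrm{Diffeo}_0(M)$ act trivially on homotopy groups to identify the image class as $([g],[g],[s])$, use that $\phi$ lies in the identity component to force equality of the two classes, and conclude via Whitehead's theorem. The only difference is that you spell out the isotopy argument for why $[h_i\circ g]=[g]$, which the paper states without elaboration.
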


\begin{proof}
    The proof is largely analogous to the proof of Theorem \ref{thm:clone}. Let $[g]: S^k \rightarrow M$ denote a nontrivial element of $\pi_k(M)$. Consider the map $(g,b,r):S^k\times \{pt\}\times \{pt\} \rightarrow M\times M\times N$ representing the homotopy class $([g],0,0) \in \pi_k(M\times M \times N)$. Then under the cloning map, this element of $\pi_k(M\times M \times N)$ is sent to
    \[
    \phi \circ (g,b,r): (x,b,r) \rightarrow (h_1\circ g(x), h_2 \circ g(x), r').
    \]
    Since $h_1$ and $h_2$ are in the identity component, this also represents an element of $\pi_k(M\times M \times N)$ of the form $([g],[g],[t])$. Since $\phi$ is in the identity component, this means $[g]=0$. By Whitehead's theorem, $M$ is contractible.
\end{proof}

\section{Physical Interpretation of our results} \label{physical interpretation of our results}
Our results about classical no cloning are counter-intuitive. We see copying/cloning of classical states all around us: a computer is able to copy data (which we think of a string of 1's and 0's) without changing the original data. It seems very conceivable one can build an analogue of a computer in the framework of classical mechanics; hence it is worthwhile to examine how to reconcile the existence of a computer with our no-cloning results.

\subsection{Cloning with a computer}

We provide a thought experiment that re-conciliates our no-cloning results with our classical intuition in certain cases. See the discussion posted on \cite{Baez_2023} \footnote{Most of the content of this section arose from a series of comments I posted on the n-Category Cafe in response to Jeffery Winkler and John Baez's discussion why in the real world we never worried about the classical no-cloning theorem. See \cite{Baez_2023}}.

We loosely think of the cloning process of a computer as someone inputting a string of 0s and 1s, and the computer copying the identical strings of 0's and 1's onto another system without destroying the original sequence. 

We build a mechanical representation of the above process as follows. We imagine there is a single coin, which we label coin A, sitting on a table (we can generalize to $N$ coins for a string of ``0''s and ``1''s, but we just say one for brevity). There is another identical coin labelled coin B on the same table. We think of coin A as the system we would wish to clone, and coin B as the system we wish to convert to the state of coin A. 

A coin being ``heads'' we think of a representing ``1'' and a coin lying ``tails'' as representing ``0''. The cloning process then involves building a cloning machine, and writing down a Hamiltonian so that at $t=0$ coin $B$ is heads, and coin A is either heads or tails. After time evolution, at time $t=1$, we require the state (of being either heads or tails) of coin B to agree with that of coin A.

This is always possible, because the state of being heads or tails of a coin is represented by \emph{discrete points} in the phase space of the coin. The cloning process as we described, which is meant to model the copying process of a computer, is mathematically realized by moving a collection of \emph{discrete points} to another collection of discrete points in phase space. It is a theorem in symplectic geometry that we can always write down a Hamiltonian that moves any finite collection of discrete points into another finite collection of points (of the same cardinality) regardless of how complicated the topology of the phase space is. See for example Theorem A in \cite{transitivegroup}. Our no-cloning theorems \ref{thm:clone} \ref{thm:approximately clone} are concerned with cloning entire regions of phase space that have nontrivial topology. 

The lesson gleaned from this example is that if a physical system is constrained to move only in a small (and topological trivial) region of its phase space, then cloning it in classical mechanics can still be possible (in the coin example the coin lying on a table either heads or tails occupy discrete points in its phase space); however if the system is allowed access the entire phase space and that phase space had nontrivial topology (for the example of a coin, if it were allowed to rotate and move freely, its phase space would be $T^*(\mathbb{R}^3 \rtimes SO(3))$), then our no-cloning theorem would say cloning is impossible.

The reason we would imagine cloning in the classical world is always possible is because when we perform cloning classically, we are almost always in the former case - we always prepare our systems to lie in discrete stationary points in phase space instead of moving unconstrained in phase space. In other words, we always wait for our system to ``settle down'' to some stationary state (for the example of the coin, heads or tails) before we attempt any kind of cloning procedure. These stationary discrete states are natural for us to consider essentially because the classical world is \emph{dissipative} - all systems will eventually settle into these stationary states if we wait long enough for the energy to dissipate.

\subsection{A quantum analogy} \label{subsec:a quantum analogy}
As observed above, if we restrict to small regions of phase space with trivial topology, we can still perform cloning. However if the system can access large regions of phase space with nontrivial topology, then cloning is not possible. We note here there is a simple quantum analogue. Given a Hilbert space $\mathcal{H}$, it is always possible to copy a state lying on a one dimensional subspace (which is analogous to the preparation step in any quantum experiment) despite the quantum no-cloning theorem preventing the cloning of the entire Hilbert space. We explain this in more detail as follows.

Let $\mathcal{V} \subset \mathcal{H}$ be a one dimensional complex subspace. We take our cloning machine to be $\mathbb{C}$, with the initial state of the cloning machine $\ket{1} \in \mathbb{C}$.

Then we claim there exists a vector $\ket{b} \in \mathcal{H}$, and a unitary operator $U$ on $\mathcal{H}\otimes \mathcal{H} \otimes \mathbb{C}$ so that for any $\ket{\psi} \in \mathcal{V}$, the operator $U$ takes

\[
\ket{\psi} \otimes \ket{b} \otimes \ket{1} \rightarrow \ket{\psi}\otimes \ket{\psi} \otimes \ket{r'}
\]
This is by definition the cloning of the one dimensional vector space $\mathcal{V}$.
Such an operation is easily constructed. Take $\ket{b} \in \mathcal{V}$, and $U$ to be the identity map. Since $\psi \in \mathcal{V}$, there exists $c\in \mathbb{C}$ so that $\ket{\psi} = c\ket{b}$. Then the identity map sends (assuming $c\neq 0$)
\[
\ket{\psi} \otimes \ket{b} \otimes \ket{1}  \rightarrow \ket{\psi} \otimes \frac{1}{c} \ket{b} \otimes c\ket{1} = \ket{\psi}\otimes \ket{\psi} \otimes c\ket{1}.
\]
The case $c=0$ corresponds to $\ket{\psi}=0$. This is just the fact the identity map sends 0 to 0. We note we could have in fact picked $\ket{b}$ to be any element of $\mathcal{H}$: we can first rotate $\ket{b}$ into an element of $\mathcal{V}$ using an unitary transformation, then apply the process above.

\section{future directions}\label{sec:future directions}

\subsection{Cloning in exotic geometries}
While there is a cloning result for the standard symplectic structure on $\mathbb{R}^{2N}$ \cite{fenyes}, it is known there are certain exotic symplectic structures on $\mathbb{R}^{2N}$ \cite{exotic_symplectic} which are not symplectomorphic to the standard symplectic structure. The construction is nontrivial because locally all symplectic manifolds look alike. As we have noted in this paper, the process of cloning also depends on the global topology of the symplectic manifold. It is hence natural to wonder if the cloning process ``sees'' that the global symplectic structure is exotic. To be precise:
let $(\mathbb{R}^{2N}, \omega)$ be an exotic symplectic structure on $\mathbb{R}^{2n}$, can it still be cloned?

\subsection{Free energy for cloning}

An area of great current interest is the thermodynamic aspect of information processing \cite{information_thermo,landauer,thermodynamics_of_feedback_controlled_systems}; see the review \cite{information_thermo} and the references therein. One basic principle is that information processing, such as measurement, does not happen for free: they come at a cost, to appropriately defined notions of free energy (\cite{information_thermo}).

We can think of cloning as a type of information processing. We start with the information of the original system, and copy it into another system without altering the original copy. The general expectation in thermodynamics is that this process would cost (free) energy (\cite{information_thermo}. The question is then what would be the correct expression for the free energy associated to the cloning process. 

We observe here that the usual method of measuring energy known to symplectic geometers via the Hofer energy \cite{hofer_1990} is not applicable in this setting. To be precise, suppose $U$ is a compact region in $\mathbb{R}^{2n}$. We use cloning machine $(N,\omega')$. Let $\textup{Ham}$ denote the space of all Hamiltonian diffeomorphisms. We say $\phi' \in \textup{Ham}(\mathbb{R}^{2n}\times \mathbb{R}^{2n} \times N)$ clones $U$ if there exists $(b,r) \in \mathbb{R}^{2n}\times N$
    so that $\phi'$ sends
    \[
    (x,b,r) \rightarrow (x,x,r')
    \]
    for all $x\in U$. 
Then by the Theorem 1.1 of \cite{Usher}, the infinum of the Hofer energy of all $\phi'$ that clone $U$ is zero. Hence in order to capture the idea of a ``free energy'' for the cloning process, a new definition of energy will be needed.
\subsection{No cloning and the classical measurement problem}

While we have explained in Section \ref{physical interpretation of our results} how to reconcile in some basic cases our no-cloning theorem with the intuition that the classical world is clonable, more ``paradoxes'' emerge if we probe further.

We propose the following thought experiment. Imagine walking into a laboratory with two identical spherical pendulums labelled A and B. Pendulum A is allowed to be in any initial state (in its phase space), and pendulum B is initially at rest in its ground state. We wish to clone the state of pendulum A with pendulum B. One would classically expect to be able ask an experimenter (or rather, to build a classical machine) to
\begin{procedure}[Cloning without errors]

\begin{enumerate}
    \item \textup{First measure precisely the position and momentum of pendulum A without perturbing the state of pendulum A;}
    \item \textup{Then push pendulum B into having the same position and momentum in its own phase space. Note this does not mean pendulum B has to occupy the same physical point as pendulum A (the two pendulum are assumed to be far apart from each other so they never collide). Since the two pendulums are identical, their phase spaces are identified with each other - we only need to push pendulum B into the corresponding point in its own phase space.}
\end{enumerate}
\end{procedure}
If we can simply have an experimenter follow the above procedure, or build a classical mechanical system to carry out the above procedure, then we would have performed a cloning, in violation of Theorem \ref{thm:clone}. In fact, within the world of idealized classical physics experiments, neither of the above two steps is particular unreasonable: we always assume that given any state of a classical system, we can measure it without changing the original state; further if given any fixed position and momentum, we can always push a particle into having the given position and momentum.

However, our main theorem \ref{thm:clone} tells us the cloning process, described as the above two procedure, simply cannot be done in the framework of Hamiltonian dynamics. Hence however simple the above two steps sound, we cannot build a classical machine to perform them in the way we have described-  at least one of the two steps above must fail in Hamiltonian dynamics. However, it is not at all apriori clear which of it will fail and in what form. Before delving into this further, we use the ``no-approximate-cloning theorem'' to deduce an even stronger paradox.

One could conceivably argue that if we try to build a machine to realize the above cloning procedure, the issue is that no finite dimensional machine can measure the state of a system (or push a pendulum into a state ) with infinite precision - that is simply an artifact of our idealization. Thence one would argue because since both steps above can only be performed up to finite errors, we don't have a contradiction to Theorem \ref{thm:clone}. However with the strength of Theorem \ref{thm:approximately clone} on no approximate cloning, we are able to move away from the idealized setup and accommodate the presence of errors. In particular, we can consider the following procedure:
\begin{procedure}[Cloning up to small error]
\begin{enumerate}
    \item \textup{First measure the position and momentum of pendulum A up to some small error; this measurement process can also perturb the state of pendulum A up to some small error.}
    \item \textup{ Next push pendulum B into the measured position and momentum of pendulm A, also up to some small error.}
\end{enumerate}
\end{procedure}
If we can build a classical machine to perform the above procedure, and in each step the size of the error can be accommodated within the framework of Definition \ref{def:approximate cloning} (for example if each error term is small in the $C^1$ norm, which seems very reasonable to achieve), then being able to perform the above procedure implies that one is able to perform \emph{approximate cloning}; however, this is prohibited by Theorem \ref{thm:approximately clone}.

Hence we see that either in the framework of precise cloning or approximate cloning, our no-cloning results clash sharply with our intuition about what kind of experimental operations can be performed in classical mechanics. To more clearly examine the issue, we make 
our questions more precise by formulating the above cloning process in a more mechanical fashion.
\begin{procedure}[A mechanical realization of cloning]
\begin{enumerate}
    \item \textup{In mechanics the way we imagine measuring the position and momentum of the rotating spherical pendulum is by scattering a large (yet finite) number of very light particles onto it. Then the information about the position and momentum of the original spherical pendulum is contained in the position and momentum of the scattered particles.}
    \item  \textup{We next imagine building a machine and a Hamiltonian to extract this information out of the scattered particles, and sending an identical pendulum into the same position and momentum. }
\end{enumerate}
\end{procedure}

We know somewhere along this process something must go wrong because of Theorems \ref{thm:clone} \ref{thm:approximately clone}. Aprior there are two obvious guesses as to what may be the issue:
\begin{enumerate}
\item During the process of scattering, because we are only allowed to scatter a finite number of particles, we are not able to obtain enough information about the original system through the position and momentum of the scattered particles. If this is the issue, the approximate no cloning theorem should imply that something must go seriously wrong in the measurement process (because we are allowed to have small errors in the cloning result) - i.e. two very distinct points in the phase space of the spherical pendulum results in nearly identical scattering data.
\item If we can obtain complete information (or just good enough information) about the spherical pendulum via scattering, then the issue might be that it is not possible to extract this data from the scattered particles and push another spherical pendulum into the same position and momentum by turning on a Hamiltonian.
\end{enumerate}

At the present it is not clear which of the two process fails; we know the failure of cloning has to do with the fact the phase space of the spherical pendulum has nontrivial topology, though it is hard to see the effect of that directly. The content of the first point may be studied through the lens of scattering theory see for example \cite{Novikov,canonical_scattering_trans_classical_mech,inverse_problems_class_scattering,qual_aspect_class_potential_scatter,S_matrix_in_classical_mech,the_scattering_matrix_and_formulas,time_dependent_scattering} for a treatment of scattering theory in classical mechanics. It is not to our knowledge that any work has been done in the direction where the scattering target has non trivial topology in their phase space. However, we expect that if the first point is the only obstruction to cloning, it should be the case that there are distinct regions in phase space that have near identical scattering outcomes; and this should be observable via numerical experiments.

It is not clear what kind of tools would be needed to help understand second bullet point, though we suspect it might be related to information theory \cite{information_thermo} applied to the classical mechanics setting: can we always efficiently retrieve data that is distributed into the position and momentum data of a collection of scattered particles?

It would seem a vital step in re-conciliating our classical intuition about cloning with the theorems proved in this paper would be a better understanding of the classical measurement process. We leave this direction for future research.
\begin{acknowledgments}
I would like to thank John Baez whose comments helped greatly improve this paper. I also thank Tomohiro Soejima for helpful comments. I would also like to thank the anonymous referee whose comments helped improve the content and the exposition of this paper. I would also like to thank Vincent Humili\`ere for telling me about Michael Usher's result.
\end{acknowledgments}

\nocite{}
\nocite{*}
\bibliography{aipsamp}

\end{document}